\newtheorem{remark}[theorem]{Remark}
\newtheorem{example}[theorem]{Example}
\begin{document}


\bibliographystyle{plain}
\title{Strong connectivity and its applications}

\author{
Peteris\ Daugulis\thanks{Institute of Life Sciences and
Technologies, Daugavpils University, Daugavpils, LV-5400, Latvia
(peteris.daugulis@du.lv). } }

\pagestyle{myheadings} \markboth{P.\ Daugulis}{Strong connectivity
and its applications}

\maketitle

\begin{abstract} Directed graphs are widely used in modelling of nonsymmetric relations in various sciences
and engineering disciplines. We discuss invariants of strongly
connected directed graphs - minimal number of vertices or edges
necessary to remove to make remaining graphs not strongly
connected. By analogy with undirected graphs these invariants are
called strong vertex/edge connectivities. We review first
properties of these invariants. Computational results for some
publicly available connectome graphs used in neuroscience are
described.

\end{abstract}

\begin{keywords}
directed graph, strongly connected graph, connectivity, connectome
\end{keywords}
\begin{AMS}
05C20, 05C21, 05C40, 92B20.
\end{AMS}

\section{Introduction} \label{intro-sec}

\subsection{The subject of study}

Directed graphs are widely used to model various objects and
processses having nonsymmetric features. Directed paths and
related notions of strong connectedness and strongly connected
components may play important roles in applications. In this paper
we discuss invariants of strongly connected graphs which describe
stability of strong connectedness with respect to vertex and edge
removal operations. Strongly connected graphs having high values
of these invariants may be considered to be stably strongly
connected with respect to removal of vertices or edges. Therefore
strong connectivities may be useful in applications.

As an example of possible application we compute these invariants
for connectome graphs which are studied in neuroscience.

\subsection{Background}

\subsection{Review of related graph theory}

In this subsection we list the necessary notions and notations
from graph theory, see also \cite{D}.

\subsubsection{Directed graphs}

We will use notions of reachability between vertices, strong
connectedness relation in the set of vertices, strongly connected
components (SCC), condensation graph.

The underlying undirected graph of a directed graph $\Gamma$ is
denoted by $\mathcal{U}(\Gamma)$: nonempty sets of directed edges
between any vertices $u$ and $v$ are substituted by undirected
edges $u-v$.

For a modern treatment of theory of directed graphs see \cite{Ba}.

\subsubsection{Undirected graphs}

Let $\Gamma=(V,E)$ be a noncomplete undirected graph. We will
denote vertex/edge connectivity of $\Gamma$ by
$\zeta_{0}(\Gamma)$/$\zeta_{1}(\Gamma)$ (tradional notations:
$\zeta_{0}=\kappa$ and $\zeta_{1}=\lambda$).

The directed graph obtained from an undirected graph $\Gamma$ by
substituting each undirected edge $u-v$ by two directed edges
$u\rightleftarrows v$ is denoted as $\mathcal{D}(\Gamma)$. Note
that $\mathcal{U}(\mathcal{D}(\Gamma))=\Gamma$, but
$\mathcal{D}(\mathcal{U}(\Gamma))\neq \Gamma$, in general.

Given a vertex or edge subset $A$, we denote by $\Gamma-A$ the
graph obtained by removing $A$ from $\Gamma$. Given a vertex
subset $U$ we denote by $\Gamma[U]$ the $\Gamma$-subgraph induced
by $U$. These notations also make sense for directed graphs.

\section{Main results}

\subsection{Strong connectivity}\

Definitions given below can be found in \cite{Ba}.

\subsubsection{Definitions}\

\begin{definition} Let $\Gamma=(V,E)$ with $|V|\ge 2$ be a strongly connected directed graph.
We call a subset of vertices $U\subseteq V$ a \sl weakening vertex
subset\rm\ provided $\Gamma-U$ is not strongly connected or
$\Gamma-U$ has one vertex. We call a subset of edges  $D\subseteq
E$ a \sl weakening edge subset\rm\ provided $\Gamma-D$ is not
strongly connected.

\end{definition}

\begin{definition}
Let $\Gamma=(V,E)$ with $|V|\ge 2$ be a strongly connected
directed graph. Minimal cardinality of weakening vertex/edge
subsets of $\Gamma$ is called \sl strong vertex/edge connectivity
(SVC/SEC)\rm\ of $\Gamma$ (denoted as
$\sigma_{0}(\Gamma)$/$\sigma_{1}(\Gamma)$).

\end{definition}

\subsubsection{Properties}\

\begin{proposition}\label{1} $\Gamma=(V,E)$ - a strongly connected directed graph, $|V|\ge
2$.

\begin{enumerate}


\item $\sigma_{0}(\Gamma)\le \zeta_{0}(\mathcal{U}(\Gamma))$.

\item If $\Gamma=\mathcal{D}(\Delta)$ for some undirected graph
$\Delta$, then $\sigma_{0}(\Gamma)=\zeta_{0}(\Delta)$.

\end{enumerate}

\end{proposition}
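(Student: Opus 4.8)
The plan is to prove both parts by relating weakening vertex subsets of a digraph to ordinary vertex separators of an associated undirected graph, using the elementary observation that a directed path in $\Gamma-U$ projects to an undirected walk in $\mathcal{U}(\Gamma)-U$.

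For part 1, I would start from a minimum-cardinality vertex separator $U$ of $\mathcal{U}(\Gamma)$, so $|U|=\zeta_{0}(\mathcal{U}(\Gamma))$ and $\mathcal{U}(\Gamma)-U$ is either disconnected or a single vertex. In the single-vertex case $\Gamma-U$ also has one vertex, hence $U$ is weakening by definition. Otherwise choose $a,b$ in distinct components of $\mathcal{U}(\Gamma)-U$; a directed $a$-to-$b$ path in $\Gamma-U$ would yield an undirected $a$-$b$ walk in $\mathcal{U}(\Gamma)-U$, a contradiction, so $\Gamma-U$ is not strongly connected and $U$ is weakening. Thus $\sigma_{0}(\Gamma)\le|U|=\zeta_{0}(\mathcal{U}(\Gamma))$. (If $\mathcal{U}(\Gamma)$ is complete one uses the convention $\zeta_{0}(K_{n})=n-1$ and notes $\sigma_{0}(\Gamma)\le n-1$ trivially, by deleting all but one vertex.)

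For part 2, since $\mathcal{U}(\mathcal{D}(\Delta))=\Delta$, part 1 already gives $\sigma_{0}(\mathcal{D}(\Delta))\le\zeta_{0}(\Delta)$. For the reverse inequality, let $U$ be a minimum weakening vertex subset of $\mathcal{D}(\Delta)$. The key structural facts are that $\mathcal{D}(\Delta)-U=\mathcal{D}(\Delta-U)$ and that for any undirected graph $H$, $H$ is connected if and only if $\mathcal{D}(H)$ is strongly connected (traversing an undirected path yields directed paths in both directions). Hence if $\mathcal{D}(\Delta)-U$ is not strongly connected then $\Delta-U$ is disconnected, so $U$ is a vertex separator of $\Delta$ and $|U|\ge\zeta_{0}(\Delta)$; and if $\mathcal{D}(\Delta)-U$ has one vertex then $|U|=|V(\Delta)|-1\ge\zeta_{0}(\Delta)$. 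Either way $\sigma_{0}(\mathcal{D}(\Delta))=|U|\ge\zeta_{0}(\Delta)$, and with the upper bound this gives equality.

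Both arguments are short; I expect no real mathematical obstacle. The only delicate points are bookkeeping: making sure the ``$\Gamma-U$ has one vertex'' escape clause in the definition of a weakening subset is invoked in exactly the degenerate cases, and fixing a consistent convention for $\zeta_{0}$ on complete graphs (the paper's standing assumption is noncompleteness, so either that case is excluded or it is absorbed by the $\zeta_{0}(K_{n})=n-1$ convention).
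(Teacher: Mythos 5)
Your proposal is correct and follows essentially the same route as the paper: part 1 by observing that deleting a minimum separator of $\mathcal{U}(\Gamma)$ destroys strong connectedness of $\Gamma$, and part 2 via the equivalence that $\mathcal{D}(\Delta)-U=\mathcal{D}(\Delta-U)$ is strongly connected iff $\Delta-U$ is connected. Your version merely spells out the degenerate cases (single remaining vertex, complete underlying graph) that the paper's terse proof leaves implicit.
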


\begin{proof}

\begin{enumerate}

%
%
%

\item Removal of a minimal $\mathcal{U}(\Gamma)$-disconnecting
vertex subset from $\Gamma$ makes the remaining graph
disconnected.

\item There are zero or two arrows in both directions between any
two vertices of $\Gamma$.  For any $U\subseteq V$ $\Gamma-U$ is
strongly connected iff $\mathcal{U}(\Gamma)-U$ is connected
undirected graph, hence the statement is proved.

\end{enumerate}
\end{proof}

\begin{proposition}\label{2}
Let $a,b\in \mathbb{N}$, $a\le b$. There exists a directed graph
$\Gamma_{a,b}$ such that $\sigma(\Gamma_{a,b})=a$ and
$\zeta_{0}(\mathcal{U}(\Gamma_{a,b}))=b$.

\end{proposition}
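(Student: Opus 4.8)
The plan is to construct $\Gamma_{a,b}$ explicitly; we may assume $1\le a\le b$, since every strongly connected directed graph on at least two vertices has $\sigma_{0}\ge 1$ and $\zeta_{0}(\mathcal{U})\ge 1$. The idea is to attach a single ``bottleneck'' vertex to a large bidirected clique so that the bottleneck has few out-arcs (pulling $\sigma_{0}$ down to $a$) while still having large degree in the underlying undirected graph (keeping $\zeta_{0}(\mathcal{U})$ at $b$). Concretely, I would take $W=\{w_{1},\dots,w_{b+1}\}$, set $\Gamma_{a,b}[W]=\mathcal{D}(K_{b+1})$, add one new vertex $v$, and insert the arcs $v\rightleftarrows w_{i}$ for $1\le i\le a$ together with the arcs $w_{j}\to v$ for $a+1\le j\le b$.

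First I would record the routine facts. $\Gamma_{a,b}$ is strongly connected and has $b+2\ge 3$ vertices: since $a\ge 1$, the arc $v\to w_{1}$ and the clique on $W$ let $v$ reach every vertex, while each $w\in W$ reaches $w_{1}$ and hence $v$. Next, $\mathcal{U}(\Gamma_{a,b})$ is $K_{b+1}$ on $W$ with $v$ adjacent to exactly $w_{1},\dots,w_{b}$, so $v$ has degree $b$, every other vertex has degree at least $b$, and deleting any $b-1$ vertices keeps the graph connected (either $v$ survives and still has a neighbour in the surviving part of $W$, or $v$ is deleted and $W$ minus a few vertices is still a clique); hence $\zeta_{0}(\mathcal{U}(\Gamma_{a,b}))=b$. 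Finally, deleting $\{w_{1},\dots,w_{a}\}$ destroys every out-arc of $v$ while leaving $b-a+2\ge 2$ vertices, so it is a weakening vertex subset of size $a$ and $\sigma_{0}(\Gamma_{a,b})\le a$.

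The substantive step --- and the one I expect to need the most care --- is the matching lower bound $\sigma_{0}(\Gamma_{a,b})\ge a$, because one must rule out \emph{every} small weakening set, not merely the obvious ``cut at $v$'' ones. I would argue by contradiction: let $U$ be weakening with $|U|\le a-1$. One first shows $v\notin U$, for otherwise $\Gamma_{a,b}-U$ is a bidirected clique on at least three vertices and so strongly connected. Since $|U|<a\le b$, at least one out-neighbour $w_{i}$ ($i\le a$) and at least one in-neighbour $w_{j}$ ($j\le b$) of $v$ survive in $\Gamma_{a,b}-U$, and $W\setminus U$ is a bidirected clique on at least $b-a+2\ge 2$ vertices, hence strongly connected; routing through $w_{i}$ and $w_{j}$ then shows that $v$ reaches and is reached from all of $W\setminus U$, so $\Gamma_{a,b}-U$ is strongly connected on at least two vertices, contradicting the choice of $U$. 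The crux is that $W$ induces a clique so large that any purely internal disruption of it costs far more than $a$, so the cheapest way to weaken $\Gamma_{a,b}$ is to sever the out-arcs of $v$; I would also verify the boundary cases $a=1$ and $a=b$ (the latter having no arcs $w_{j}\to v$) explicitly. Incidentally, the same graph realizes strong edge connectivity $a$ as well, should that reading of $\sigma$ be intended.
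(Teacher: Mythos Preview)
Your construction is correct and complete, but it is genuinely different from the paper's. The paper handles $a=b$ with $\mathcal{D}(K_{b+1})$ and for $a<b$ builds a graph on four disjoint vertex classes $U,V,W_{a},W'_{b-a}$ with $|U|=|V|=b+1$, $|W_{a}|=a$, $|W'_{b-a}|=b-a$, splitting further into the subcases $a\le b/2$ (classes linked in a directed $4$-cycle $U\to W_{a}\to V\to W'_{b-a}\to U$) and $a>b/2$ (links through $W_{a}$ made bidirectional, those through $W'_{b-a}$ left one-way); in each subcase $W_{a}$ is asserted to be a minimum weakening set and $W_{a}\cup W'_{b-a}$ a minimum undirected separator. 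Your single ``bottleneck vertex attached to a bidirected clique'' avoids all case distinctions, uses only $b+2$ vertices instead of $3b+2$, and gives a clean proof of the lower bound $\sigma_{0}\ge a$: any set of size at most $a-1$ either removes $v$ (leaving a large bidirected clique) or spares an out-neighbour of $v$ and at least two clique vertices. What the paper's four-class construction buys is a more symmetric picture in which the undirected cut of size $b$ is visibly $W_{a}\cup W'_{b-a}$, but it pays for this with the extra case split and a sketchier verification of minimality than yours.
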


\begin{proof}
For any $b\in \mathbb{N}$
$\zeta_{0}(K_{b+1})=\sigma(\mathcal{D}(K_{b+1}))=b$. Thus the case
$a=b$ is proved.

Let $a<b$. We consider two subcases: $a\le \frac{b}{2}$ and
$a>\frac{b}{2}$.

\paragraph{Case $a\le \frac{b}{2}$}\

$\Gamma_{a,b}=(V_{a,b},E_{a,b})$, where
$V_{a,b}=\{U,V,W_{a},W'_{b-a}\}$, $|U|=|V|=b+1$, $|W_{a}|=a$,
$|W'_{b-a}|=b-a$. $E_{a,b}$ contains 1) edges from every element
of $U$ to every element of $W_{a}$, 2) edges from every element of
$W_{a}$ to every element of $V$, 3) edges from every element of
$V$ to every element of $W'_{b-a}$, 4) edges from every element of
$W'_{b-a}$ to every element of  $U$.

In this case $a=\min(a,b-a)$, $W_{a}$ is a minimal weakening
vertex subset for $\Gamma_{a,b}$. $W_{a}\cup W'_{b-a}$ is a
minimal disconnecting vertex subset of
$\mathcal{U}(\Gamma_{a,b})$.

\vspace{5pt}

\paragraph{Case $a> \frac{b}{2}$}\

$\Gamma_{a,b}=(V_{a,b},E_{a,b})$, where
$V_{a,b}=\{U,V,W_{a},W'_{b-a}\}$, $|U|=|V|=b+1$, $|W_{a}|=a$,
$|W'_{b-a}|=b-a$.
 $E_{a,b}$ contains 1) edges in both directions between every element of $U$ and every
element of $W_{a}$, 2) edges in both directions between every
element of $W_{a}$ and every element of  $V$, 3) edges from every
element of $U$ to every element of $W'_{b-a}$, 4) edges from every
element of $W'_{b-a}$ and every element of $V$.

In this case $W_{a}$ is a minimal weakening vertex subset for
$\Gamma_{a,b}$.

In both cases $W_{a}\cup W'_{b-a}$ is a minimal disconnecting
vertex subset of $\mathcal{U}(\Gamma_{a,b})$ and
$\zeta_{0}(\Gamma_{a,b})=|W_{a}\cup W'_{b-a}|=b$.

\end{proof}

\begin{proposition}\label{3}\
 SVC of a directed graph on $n$ vertices can be computed in
$O(n^5)$ time.

\end{proposition}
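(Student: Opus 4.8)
The plan is to reduce strong vertex connectivity to a sequence of $s$-$t$ maximum-flow / minimum-vertex-cut computations, using the Menger-type characterization that makes $\sigma_0$ tractable. For an ordered pair of distinct vertices $(s,t)$, the minimum number of vertices (other than $s,t$) whose removal destroys all directed $s\to t$ paths equals, by Menger's theorem for digraphs, the maximum number of internally vertex-disjoint $s\to t$ directed paths; call this quantity $\kappa(s,t)$. The key observation is that $\Gamma-U$ fails to be strongly connected precisely when there is some ordered pair $(s,t)$ with $s,t\notin U$ such that $U$ separates $t$ from $s$ (no directed path $s\to t$ in $\Gamma-U$), together with the boundary case $|V\setminus U|=1$. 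Hence $\sigma_0(\Gamma)=\min\bigl(|V|-1,\ \min_{(s,t)}\kappa(s,t)\bigr)$ over ordered pairs of distinct non-adjacent-in-the-needed-sense vertices; I will need to be slightly careful about pairs $(s,t)$ with an edge $s\to t$, which I handle in the standard way by splitting vertices.

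**Next I would** spell out the vertex-splitting construction that converts a vertex-capacity problem into an ordinary edge-capacity max-flow instance: replace each vertex $v$ by $v_{\mathrm{in}}\to v_{\mathrm{out}}$ with capacity $1$, and each directed edge $u\to v$ of $\Gamma$ by $u_{\mathrm{out}}\to v_{\mathrm{in}}$ with capacity $\infty$; then the max-flow from $s_{\mathrm{out}}$ to $t_{\mathrm{in}}$ equals $\kappa(s,t)$. A single such max-flow, since all capacities are unit (on the vertex arcs) or effectively bounded by $n$, can be computed in $O(V\cdot E)$ time by Ford–Fulkerson augmentation — at most $n$ augmenting paths, each found by BFS/DFS in $O(E)=O(n^2)$ time — giving $O(n^3)$ per pair. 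To avoid running this over all $\Theta(n^2)$ pairs, I would fix one vertex $v_1$ and argue (the usual trick for undirected/directed vertex connectivity) that it suffices to compute $\kappa(v_1,v_j)$ and $\kappa(v_j,v_1)$ for all $j$, plus $\kappa(v_i,v_j)$ for $i,j$ ranging over a bounded-size set, because any minimum weakening set either avoids $v_1$ — in which case $v_1$ lies in one side of the resulting non-strongly-connected partition and is separated from or to some vertex — or it does not, and one enumerates the $O(\sigma_0)=O(n)$ possibilities. Counting generously: $O(n)$ pairs against $v_1$ at $O(n^3)$ each is $O(n^4)$; the residual enumeration contributes another factor of at most $O(n)$, yielding the stated $O(n^5)$.

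\textbf{The main obstacle} is getting the reduction to the right family of ordered pairs correct and the bookkeeping that shows a linear-in-$n$ number of max-flow calls suffices rather than a quadratic one — in particular handling the pairs $(s,t)$ joined by an edge (where one cannot ``cut between'' them and must instead recurse on neighborhoods) and the degenerate case where the only weakening set of size $<|V|-1$ isolates a single vertex. Once the correspondence ``$\sigma_0(\Gamma)<k$ iff some admissible pair has $\kappa(s,t)<k$ or $|V|\le k$'' is nailed down, the complexity bound follows by tallying: each of $O(n^2)$ (or, after the optimization, $O(n)$ extra) flow computations costs $O(n^3)$, comfortably inside $O(n^5)$. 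I would therefore present the argument as: (i) Menger characterization of $\kappa(s,t)$ via vertex-splitting; (ii) the min-over-pairs formula for $\sigma_0$; (iii) the observation that a fixed pivot vertex plus an $O(\sigma_0)$-sized enumeration captures a minimum weakening set; (iv) the arithmetic $O(n)\cdot O(n)\cdot O(n^3)=O(n^5)$.
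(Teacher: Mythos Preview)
Your approach and the paper's rest on the same foundation: compute the local directed vertex connectivity $\kappa(s,t)$ via Menger/max-flow in $O(n^3)$ per ordered pair, then take the minimum over pairs to obtain $\sigma_0(\Gamma)$. The paper, however, proceeds more directly: it simply ranges over \emph{all} $\frac{n(n-1)}{2}$ unordered pairs, computes $\sigma(u,v)=\min(MF(u,v),MF(v,u))$ for each, and multiplies $O(n^2)$ pairs by $O(n^3)$ per max-flow to reach $O(n^5)$. Your pivot-vertex reduction and the subsequent enumeration over weakening sets containing the pivot are unnecessary here --- those are devices used in the connectivity literature to push the bound \emph{below} $O(n^5)$, but since the target is only $O(n^5)$ they add complication without payoff; indeed, in your own accounting the extra factor of $n$ from the ``residual enumeration'' simply restores the $O(n^5)$ you would have obtained from the brute-force pair loop anyway. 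On the plus side, your explicit vertex-splitting construction is a detail the paper's sketch glosses over (it speaks of edge-capacity max-flow $MF(u,v)$ while drawing a vertex-cut conclusion), so your version is more careful on that point.
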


\begin{proof}
Assume again $\Gamma=(V,E)$ is an edge-weighted graph with weight
of each edge equal to $1$. For any two vertices $u,v$ define
$MF(u,v)$ equal to the maximal flow from $u$ to $v$ in $\Gamma$.
$MF(u,v)$ is equal to the cardinality of a maximal system of
disjoint $(u,v)$-paths. Define
$\sigma(u,v):=\min(MF(u,v),MF(v,u)).$ The minimal number of
vertices necessary to remove to make $u$ and $v$ in different SCC
(denoted by $\sigma(u,v)$) is equal to $\min(MF(u,v),MF(v,u))$, it
can be computed in $O(n^3)$ time, see \cite{C}. By statement 1 of
\ref{1} we have that $\sigma(\Gamma)=\min\limits_{u\in V,v\in
V}\sigma(u,v)$. Since there are $\frac{n(n-1)}{2}$ vertex pairs,
it follows that $\sigma(\Gamma)$ can be computed in $O(n^3)\cdot
n^2=O(n^5)$ time.
\end{proof}

\begin{example} $\Gamma_{1,3}$ is shown in Fig.1.  $W_{1}=\{1\}$, $W'_{2}=\{2,3\}$. A
minimal weakening vertex subset of $\Gamma_{1,3}$ is $W_{1}$. A
minimal disconnecting vertex subset of $\mathcal{U}(\Gamma_{1,3})$
is $W_{1}\cup W'_{2}$.

\begin{center} $ \xymatrix@R=0.8pc@C=0.8pc{
&&4\ar[rrrrrd]&&&&&&&&&&8\ar[llllldd]\ar[lllllddd]&&\\
5\ar[rrrrrrr]&&&&&&&1\ar[rrrrru]\ar[rrrrrrr]\ar[ddrrrrrrr]\ar[dddrrrrr]&&&&&&&9\ar[llllllld]\ar[llllllldd]\\
&&&&&&&2\ar[llllluu]\ar[lllllllu]\ar[llllllld]\ar[llllldd]&&&&&&&\\
6\ar[uurrrrrrr]&&&&&&&3\ar[llllluuu]\ar[llllllluu]\ar[lllllll]\ar[llllld]&&&&&&&10\ar[lllllllu]\ar[lllllll]\\
&&7\ar[uuurrrrr]&&&&&&&&&&11\ar[llllluu]\ar[lllllu]&&\\
} $

Fig.1.  - $\Gamma_{1,3}$.

\end{center}

\end{example}

\begin{example} $\Gamma_{2,3}$ is shown in Fig.2.  Pairs of oriented edges with common
vertices are shown as undirected edges. $W_{2}=\{1,2\}$,
$W'_{1}=\{3\}$. A minimal weakening vertex subset of
$\Gamma_{2,3}$ is $W_{2}$. A minimal disconnecting vertex subset
of $\mathcal{U}(\Gamma_{2,3})$ is $W_{2}\cup W'_{1}$.

\begin{center} $ \xymatrix@R=0.8pc@C=0.8pc{
&&4\ar@{-}[rrrrrd]\ar@{-}[rrrrrdd]\ar[dddrrrrr]&&&&&&&&&&8&&\\
5\ar@{-}[rrrrrrr]\ar@{-}[rrrrrrrd]\ar[ddrrrrrrr]&&&&&&&1\ar@{-}[rrrrru]\ar@{-}[rrrrrrr]\ar@{-}[rrrrrrrdd]\ar@{-}[rrrrrddd]&&&&&&&9\\
&&&&&&&2\ar@{-}[rrrrruu]\ar@{-}[rrrrrrru]\ar@{-}[rrrrrrrd]\ar@{-}[rrrrrdd]&&&&&&&\\
6\ar@{-}[rrrrrrruu]\ar@{-}[rrrrrrru]\ar[rrrrrrr]&&&&&&&3\ar[rrrrruuu]\ar[rrrrrrruu]\ar[rrrrrrr]\ar[rrrrrd]&&&&&&&10\\
&&7\ar@{-}[rrrrruuu]\ar@{-}[rrrrruu]\ar[urrrrr]&&&&&&&&&&11&&\\
} $

Fig.2.  - $\Gamma_{2,3}$, pairs of opposite edges shown as
undirected edges.

\end{center}

\end{example}

%
%
%
%
%
%
%
%
%
%
%
%
%
%
%
%
%
%

\begin{remark}
Computation of SC can be iterated to get more information about
graph structure as follows. Suppose we are given a strongly
connected graph $\Gamma$.

\begin{enumerate}

\item Find all (or at least one) minimal weakening set, find all
resulting $SCC$ $\Gamma_{1,1},\Gamma_{1,2},...,\Gamma_{1,n_{1}}$
and the corresponding condensation graph $\Delta_{1}$ with $n_{1}$
vertices.

\item Repeat Step 1 with each nontrivial SCC. Get a family of
acyclic condensation graphs and a family of SCC.

\end{enumerate}

\end{remark}

\subsection{An application}

It makes sense to study strong connectivity in graph models where
directed paths and reachability play important roles. A very
simple example of such an application would be a city graph model
where vertices are street intersections and edges are (possibly
one-way) streets. A meaningful question: how many one-way streets
or intersections can be closed so that the remaining street
network is still strongly connected.

In this subsection we describe some computational results related
to strong connectivity of connectome graphs considered in
neuroscience.

\subsubsection{Connectome graphs}

Connectome graphs are discrete mathematical models used for
modelling nervous systems on different scales, see \cite{K},
\cite{S2}. On the microscale level these graphs are special cases
of \sl cell graphs\rm\ which are studied for many types  of human
body tissues. In such graphs vertices correspond to cells and
edges correspond to physical cell contacts or substance transfer
(\sl volume transmission\rm , see \cite{Be}) links. See \cite{B}
for an example of cell graph application in tumour tissue
modelling. On mesoscale and macroscale levels connectome graphs
are essentially quotient graphs of microscale cell graphs. In this
paper we do not deal with connectome scale, nature of graph edges
and other modelling issues, we are interested only in applications
of graph-theoretic concepts and algorithms. Connectome graph edges
may be directed, undirected and weighted (labelled). We consider
only directed graph structure of connectomes, each edge is
assigned the constant weight $1$. We use data and references
available at  \cite{O}.

We assume that direction of connectome edges, directed paths and
reachability have considerable biological meaning. For example,
direction of edges may be correlated with flow of signals or
substances.

\subsubsection{Cat}


 Graph ($C$) description - macroscale connectome graph (cortiocortical connections, see \cite{R}),
strongly connected graph with $65$ vertices and $1139$ edges,
$\zeta_{i}(\mathcal{U}(C))=3$, diameter $3$, minimal degree $3$,
maximal degree $45$.

$\sigma_{0}(C)=\sigma_{1}(C)=1$. There is a unique weakening
vertex for $C$. After removing it $C$ splits into one trivial SCC
and a SCC having $63$ vertices. There is a unique weakening edge
pair for $C$. After removing it $C$ splits into two SCC having
$64$ and $1$ vertices.

We do $6$ iterations. At each step the graph splits into one
trivial and one large strongly connected component. The list of SC
of large components starting from $C$ is $[1,2,3,3,3,3,2]$. The
list of vertex connectivities of underlying graphs of large
components is $[3,3,7,7,7,6]$.

\subsubsection{Rat}

There are $3$ macroscale connectomes $\{R_{1},R_{2},R_{3}\}$ for
\sl Rattus nor\-ve\-gi\-cus\rm\ available at \cite{O}. Every graph
has one nontrivial SCC having $502$ or $493$ vertices. Graphs have
similar properties in all cases.
$\zeta_{i}(\mathcal{U}(R_{n}))=7$.

$\sigma_{0}(R_{n})=\sigma_{1}(R_{n})=2$. Each graph has a unique
weakening vertex pair. After removing a weakening vertex pair
$R_{n}$ splits into one or several trivial SCC and one large SCC.

%
%
%
%

\subsubsection{Fly}

A mesoscale fly connectome graph (see \cite{T}) description -
$1781$ vertices and $9735$ edges, $996$ strongly connected
components - one with $785$ vertices, one with $2$ vertices, the
other components trivial, minimal degree $1$, maximal degree
$927$.

Let $F_{1}$ - SCC of the described graph with $785$ vertices.
$\zeta_{i}(\mathcal{U}(F_{1}))=1$.
$\sigma_{0}(F_{1})=\sigma_{1}(F_{1})=1$, $173$ weakening vertices.
After removing a weakening vertex $F_{1}$ splits into several
trivial and $1$ large SCC. $F_{1}$ has $245$ weakening edges.

%
%

\subsection{Conclusion}
We discuss study of invariants directed graphs - strong
connectivities. Vertex connectivity can be computed in
polynomial-time using network algorithms.

We present some computation results involving graph models in
biology - connectome graphs of nerve tissues.

Some of our observations concerning connectome graphs:
\begin{enumerate}

\item[1)] after removing a minimal weakening vertex  subset the
considered connectome graphs split into one nontrivial and one or
several trivial SCC;

\item[2)] strong connectivities are close to undirected
connectivities;

\item[3)] often there is a unique weakening vertex or edge subset.

\end{enumerate}

%
%
%
%
%
%

\section*{Acknowledgement} Computations were performed using the
computational algebra system MAGMA, see Bosma et al. \cite{Bo}.


\begin{thebibliography}{1}

\bibitem{Ba}
Bang-Jensen, J., Gutin G.
\newblock {Digraphs: Theory, Algorithms and Applications}.
\newblock {Springer}.2010

\bibitem{Be}
Bentley, B., Branicky, R., Barnes, C. L. et al.
\newblock {The multilayer connectome of Caenorhabditis elegans}.
\newblock www.arxiv.org, 2016, arXiv:1608.08793.

\bibitem{B}
Bilgin, C., Demir, C., Nagi, C., Yener, B.
\newblock {Cell-graph mining for breast tissue modeling and
classification}.
\newblock {\em Conf Proc IEEE Eng Med Biol Soc}, 2007, 5311-4.

\bibitem{Bo}
Bosma, W., Cannon, J., and Playoust, C.
\newblock {The Magma algebra system. I. The user language}.
\newblock {J. Symbolic Comput., 24}, 2997, pp. 235-265.

\bibitem{C}
Cormen, T., Leiserson C., Rivest R., Clifford S.
\newblock {Introduction to Algorithms, Second
Edition}.
\newblock {MIT Press and McGraw-Hill}, 2001, pp. 643–668.

\bibitem{D}
Diestel, R.
\newblock  {Graph Theory}.
\newblock {\em Graduate Texts in Mathematics}, Vol.173,
Springer-Verlag, Heidelberg, 2010.



\bibitem{K}
 Kaiser, M.
 \newblock A Tutorial in Connectome Analysis:
 Topological and Spatial Features of Brain Networks.
 \newblock {\em Neuroimage}, 57, 2011, pp.892–-907.




\bibitem{O}
\newblock {http://www.openconnectomeproject.org}. 2016.



\bibitem{R}
Reus M.A., Heuvel M.P.
\newblock
Rich Club Organization and Intermodule Communication in the Cat
Connectome.
\newblock {
The Journal of Neuroscience}, 7 August 2013, 33(32), pp.
12929-12939; doi: 10.1523/JNEUROSCI.1448-13.2013.

\bibitem{S2}
Sporns, O.
\newblock The human connectome: Origins and
challenges.
\newblock {\em Neuroimage} 80, 2013, pp.53-61.

\bibitem{T}
Takemura S., Bharioke A., Lu Z. et al.
\newblock A visual motion detection circuit suggested by Drosophila
connectomics.
\newblock {Nature}, 500, 8 August, 2013, pp.175–181,
doi:10.1038/nature12450.








\end{thebibliography}
\end{document}